\newcommand{\id}{\mathbbm{1}}
\newtheorem{thm}{Theorem}
\newtheorem{lem}[thm]{Lemma}
\newtheorem{defin}{Definition}
\let\oldmarginpar\marginpar
\renewcommand\marginpar[1]{\-\oldmarginpar[\raggedleft\marginparsize #1]%
{\raggedright\marginparsize #1}}
\begin{document}

\setlength{\tabcolsep}{1ex}%for latex tables

\title{Tsirelson's bound from a Generalised Data Processing Inequality}% Force line breaks with \\
\author{Oscar C. O. Dahlsten}
\affiliation{Institute for Theoretical Physics, ETH Z\"urich, 8093 Zurich, Switzerland}
\affiliation{Center for Quantum Technologies, National University of Singapore, Republic of Singapore}
\affiliation{Atomic and Laser Physics, Clarendon Laboratory,
University of Oxford, Parks Road, Oxford OX13PU, United Kingdom}

\author{Daniel Lercher}
\affiliation{Institute for Theoretical Physics, ETH Z\"urich, 8093 Zurich, Switzerland}
\affiliation{Department of Mathematics, Technische Universit\"at M\"unchen, 85748
Garching, Germany}
\author{Renato Renner }
\affiliation{Institute for Theoretical Physics, ETH Z\"urich, 8093 Zurich, Switzerland}

% Include the date command, but leave its argument blank.
\date{\today}

\begin{abstract}
The strength of quantum correlations is bounded from above by Tsirelson's bound. We establish a connection between this bound and the fact that correlations between two systems cannot increase under local operations, a property known as the \emph{data processing inequality}. More specifically, we consider arbitrary convex probabilistic theories. These can be equipped with an entropy measure that naturally generalizes the von Neumann entropy, as shown recently in [Short and Wehner, Barnum et. al.]. We prove that if the data processing inequality holds with respect to this generalized entropy measure then the underlying theory necessarily respects Tsirelson's bound. We moreover generalise this statement to any entropy measure satisfying certain minimal requirements. A consequence of our result is that not all of the entropic relations used to derive Tsirelson's bound via information causality in [Pawlowski et. al.] are necessary.   
\end{abstract}

%%%%%%%%%%%%%%%%% END OF PREAMBLE %%%%%%%%%%%%%%%%

\pacs{03.65.Ta, 03.65.Ud}

\maketitle 

{\em \bf Introduction.---}Quantum mechanics departs fundamentally from any classical theory by allowing non-local correlations~\cite{Bell64}. The  existence of such correlations has been extensively verified in experiments (up to a few loopholes), see e.g.\ \cite{Aspect99}. As was shown by Bell, these correlations imply that the world is not both local and realist, two standard assumptions underpinning the classical mechanical world-view~\cite{Bell64}. Apart from their fundamental theoretical interest, non-local correlations are also of technological importance, for example as the essential ingredient in Ekert-style quantum cryptographic schemes~\cite{Ekert91}. 

However there is a limit to how much local realism is violated. The strength of quantum correlations are themselves upper bounded by Tsirelson's bound~\cite{Tsirelson93,Cirelson80}. This is a non-trivial bound, because it is conceivable to violate Bell-inequalities more than quantum theory, without having a theory which is signalling (allows instantaneous information transfer across space). For example it is possible to conceive of {\em PR-boxes}, also known as {\em non-local boxes}, hypothetical systems which maximally violate the CHSH Bell-inquality, without being signalling~\cite{PopescuR94}. 

The question then arises as to whether one can associate a fundamental assumption about nature other than non-signalling with Tsirelson's bound. Such an assumption could then be labelled a fundamental principle underpinning quantum theory, and possibly form part of a much-sought-for set of principles from which quantum theory could be derived.

There has already been significant effort in this direction. For example, it is now known that the existence of maximally Bell-violating correlations would lead to some communication complexity problems becoming trivial~\cite{Buhrman06, BrassardBLMTU06}, the possibility of oblivious transfer~\cite{vanDam05}, weaker uncertainty relations~\cite{OppenheimW10}, general invalidation of quantum theory locally~\cite{BarnumBBEW10} and severely limited dynamics~\cite{Barrett07,GrossMCD10}. 

A recent string of related papers have moreover been concerned with a principle called information causality~\cite{PawlowskiPKSWZ09,AllcockBPS09, CavalcantiSS10}. A great advantage with this principle is that the exact Tsirelson's bound is recovered, i.e.\ it rules out any stronger correlations, not just the maximally strong ones. The principle amounts to placing a limit on how well two separated parties can perform in a particular game (van Dam's game~\cite{vanDam05}) where they share a resource state. This limits the resource state in such a way that Tsirelson's bound is recovered. Whilst the original interpretation of information causality as a particularly simple generalisation of non-signalling has been questioned (see e.g.~\cite{ShortW09}), the principle is ---as mentioned above--- powerful. 

\begin{figure}%[htb]
 \centering \includegraphics[width=\linewidth]{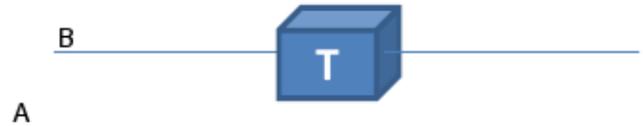}
 \caption{The Data Processing Inequality states that the correlations between $A$ and $B$ cannot increase under a local operation $T$ on $B$. More specifically $H(A|B)\leq H(A|T(B))$.} \label{fig:DPI} \end{figure}

Intriguingly, in the proof that information causality holds in quantum theory, a specific limited set of information-theoretic theorems are used. One may thus replace information causality as a postulate with those information-theoretic theorems. This is attractive if one seeks an information-theoretic set of principles for quantum theory. In order to discuss the validity of such theorems outside of quantum theory, however, one needs definitions of the relevant entropies for general probabilistic theories. Fortuitously, such definitions were recently proposed and investigated in~\cite{ShortW09,Barnum09,KimuraNI09}. In \cite{ShortW09,Barnum09} information causality is also discussed. In~\cite{Barnum09} three sufficient conditions under which a generalised probabilistic theory respects information causality are determined. In~\cite{ShortW09} it is shown that if one follows the information causality proof in the case of {\em box-world}, the theory with PR-boxes and all other non-signalling distributions, the proof breaks down at the point where one needs to assume the so-called strong subadditivity of entropy. An alternative approach to deriving information causality from more basic entropic principles appears in~\cite{HsuIF11}. These recent works when taken together suggest that one may hope for a small and operationally motivated set of information-theoretic relations from which Tsirelson's bound, and perhaps even quantum theory, can be derived.

We here investigate the Data Processing Inequality as such a principle. This essentially states that correlations, quantified via conditional entropies,  cannot increase under local operations, see Fig.~\ref{fig:DPI}. In order to define this in general, we use an entropy proposed in~\cite{ShortW09}, which naturally generalizes the von Neumann entropy (and reduces to the latter in the case of quantum theory). We prove that, surprisingly, this generalised Data Processing Inequality alone implies Tsirelson's bound.

We proceed as follows. We firstly describe the framework of generalised probabilistic theories within which we work. Then we define Tsirelson's bound as well as information causality. We go on to describe how to define entropy in an operational manner as in~\cite{ShortW09}. This is used to define the generalised Data Processing Inequality (DPI). We then prove that DPI implies Tsirelson's bound. This involves proving a more general theorem of which the main result is a corollary. Finally we compare the results to previous ones and discuss the implications and interpretation of the principle.

{\em \bf Convex, operational, probabilistic theories.---}We use the framework of convex probabilistic theories~\cite{Barrett07,BarnumBLW06,Hardy01}. This amounts to taking the minimalistic pragmatic view that the operational content of a theory is in the predicted statistics of measurement outcomes. 

The {\em state} of a system by definition determines the probabilities of all possible measurement outcomes. The state is  completely specified, again by definition, by the probabilities for the outcomes of $k$ so-called {\em fiducial} measurements $0,\dots,{k-1}$. $k$ may be significantly smaller than the total number of measurements (e.g.\ in quantum theory there is a continuum of measurements but $k=d^2$ for a state on a Hilbert space of dimension d). If these fiducial measurements 
each have $l$ possible outcomes $0,\dots,l-1$ we will say that the system is of type $(k,l)$.

We can thus write a (normalised) state as a list of $P(i|j)$, denoting the probability of getting outcome $i$ if fiducial measurement $j$ is performed. We represent this by $\vec{P}$. The normalisation of the state is $|\vec{P}|:=\sum_i P(i|j)$ and is for all valid states independent of the choice of fiducial measurement $j$. A state is said to be normalised if $|\vec{P}|=1$ and subnormalised if $|\vec{P}|< 1$.

We assume that the set of allowed normalized states $\mathcal{S}$ is closed and convex (so that any probabilistic mixture of states is an allowed state). We say a state is {\em pure} if it cannot be written as a convex mixture of other states. 
A {\em theory} is defined by the set of allowed states, $\mathcal{S}$, as well as the set of allowed transformations.

Transformations take states to states. They must be linear as probabilistic mixtures of different states must be conserved~\cite{Barrett07}. Transformations can thus be modelled as $\vec{P} \mapsto M\cdot\vec{P}$, where $M$ is a matrix. If one performs a measurement with several outcomes each outcome is associated with a certain transform $M_i$. The unnormalized state associated with the $i$-th outcome is  $M_i\cdot\vec{P}$, and the associated probability of the $i$-th outcome is given by the normalisation factor after the transformation: $|M_i\cdot\vec{P}|$.

If one is only interested in the probabilities of the different outcomes of a measurement one can always associate
 with a transformation $\{M_i\}$ a set of vectors $\{R_i\}$ such that $\vec{R}_i\cdot\vec{P}=|M_i\cdot \vec{P}|\, \forall \, \vec{P} \in \mathcal{S}$. Consequently, for a normalized state $\vec{P}$, $\vec{R}_i\cdot\vec{P}$ is the probability of the $i$th outcome.

It is also possible to combine single systems to form multipartite systems. If one performs local operations on the systems $A$ and $B$ the final unnormalized state of the joint system does by assumption not depend on the temporal ordering of the operations. A direct consequence of this is the no-signaling principle: measuring system $B$ cannot give information about what transformation was applied on $A$~\cite{Barrett07}.

We will make the non-trivial but standard assumption that the global state of a bipartite system can be completely determined by specifying joint probabilities of outcomes for fiducial measurements performed simultaneously on each subsystem. Accordingly, the joint state of two parties is uniquely specified by the list $P(ii'|jj')$, denoting the probability of getting the outcomes $i$ and $i'$ if one performs fiducial measurement $j$ on $A$
and $j'$ on $B$.

For a joint state $\vec P_{AB}$, the {\em marginal} (also called {\em reduced}) state of system $A$, denoted $\vec P_A$, is given by 
$P_A(i|j)\equiv \sum_{i'} P_{AB}(ii'|jj')$. Similarly, the {\em conditional marginal} state $\vec P_{A|B:k,l}$ is defined by
\begin{equation}
 \label{def_condmargstate}
P_{A|B:k,l}(i|j)\equiv \frac{P_{AB}(ik|jl)}{P_B(k|l)}\,.
\end{equation}
This represents the state of system $A$ after a fiducial measurement $l$ was performed on system $B$ and the outcome $k$ was obtained.  

It was shown in~\cite{Barrett07} that denoting the vector spaces containing the vectors $\vec P_{AB}$, $\vec P_A$, and $\vec P_B$ by 
$V_{AB}$, $V_A$ and $V_B$, respectively, one can relate the spaces by $V_{AB}=V_A\otimes V_B$ ($\otimes$ being the tensor product). 
One assumes that for $\vec{P}_A \in \mathcal{S}_A$ and $\vec{P}_B \in \mathcal{S}_B$ we have 
$\vec{P}_A \otimes \vec{P}_B \in \mathcal{S}_{AB}$. This implies that any $\vec P_{AB}\in\mathcal{S}_{AB}$ can be written as 
$\vec P_{AB}=\sum_i r_i \vec P_A^i\otimes \vec P_B^i$ with $\vec P_A^i\in \mathcal{S}_A$ and $\vec P_B^i\in \mathcal{S}_B$ normalized and pure and $r_i\in \mathds{R}$~\cite{Barrett07}. 

For a transformation on system $A$ defined by $\vec P_A\mapsto \vec P_{A'}=M_A\cdot \vec P_A$ the transformation of the joint system is given by $\vec P_{AB}\mapsto \vec P_{A'B}=(M_A\otimes \mathds{1})\cdot \vec P_{AB}$~\cite{Barrett07}. We demand that transformations $M_A$ on any system $A$ are {\em well-defined}, meaning $(M_A \otimes I_B) \cdot \vec{P}_{AB} \in \mathcal{S}_{AB}$ whenever $\vec{P}_{AB} \in \mathcal{S}_{AB}$ for all types of system $B$. 

In the following, we will always assume that the set of transformations allowed by the theory includes removing systems (which corresponds to taking the marginal state, as defined above) and adding a system, taking $\vec P_{A} \mapsto \vec P_{A}\otimes \vec P_{B}$.  

We also demand that the theory contains `classical' systems of type $(1,d)$ for all $d\in\mathds{N}$. We call the trivial classical system of type $(1,1)$ the vacuum ($V$). We shall in our proofs, taking inspiration from~\cite{BarnumBLW06}, use the fact that the state of a classical system can be {\em cloned}---see the technical supplement.

As shown e.g.\ in~\cite{Hardy01}, finite dimensional quantum theory as well as classical probability theory fit into this framework. So does {\em box-world}~\cite{Barrett07}. This allows all states on discrete sets of measurements that are non-signalling. The simplest non-trivial example of this is for elementary systems of type (2,2). The joint state space of two such systems includes PR-boxes. A key difference between box-world and quantum theory is that only the latter respects Tsirelson's bound.

{\em \bf Tsirelson's bound.---}The quantum correlation strength as quantified by the CHSH Bell inequality~\cite{ClauserHSH69} is upper bounded by Tsirelson's bound~\cite{Tsirelson93, Cirelson80}.
\begin{defin}[Tsirelson's bound]
Consider two systems $A$ and $B$, with two choices of measurements ($0$ or $1$) and two outputs each ($a$ and $b$). Define the quantity
 \begin{equation*}
S:= p(a=b|00)+p(a=b|01)+p(a=b|10)+p(a\neq b|11).
\end{equation*}
The theory governing the systems is said to satisfy Tsirelson's bound if $2-\sqrt{2} \leq S \leq 2+\sqrt{2}$ for any states allowed by the theory.
\end{defin}

A PR-box (also known as a non-local box) is designed to have S=0 or 4, thus maximally violating the Tsirelson bound~\cite{PopescuR94}. It is defined (up to relabellings of measurement choices and outcomes) to be a state where $$p(a=b|00)=p(a=b|01)=p(a=b|10)=p(a\neq b|11)=1$$ and the local marginal states are uniformly random.

{\em \bf Information causality.---}Let there be two space-like separated parties, Alice and Bob which share an arbitrary no-signaling resource. Alice then receives a random bit-string $\vec a=(a_0,\dots,a_{N-1})$, which is not known to Bob. The bits $a_i$ are unbiased and independently distributed. At the same time Bob gets a random variable $b\in\lbrace 0,\dots,N-1\rbrace$, which is unknown to Alice. Alice is free to make
use of her local resources in order to prepare a classical bit-string $\vec x$ of length $m$ which she sends to Bob. Bob, having received Alice's message, is then asked to guess the value of $a_b$ as best as he can. Let us denote Bob's guess by $\beta$. The efficiency of Alice's and Bob's strategy can be quantified by
$I\equiv \sum_{i}I_{\mathrm{Sh}}(a_i:\beta|b=i)$ where $I_{\mathrm{Sh}}(a_i:\beta|b=i)$ is the Shannon mutual information between $a_i$ and $\beta$, computed
under the condition that Bob has received $b=i$.
\begin{defin}[Information Causality]
A theory is said to respect information causality if in the above game $I\leq m$ for any allowed resource state.
\end{defin}

It was shown in~\cite{PawlowskiPKSWZ09} that information causality implies Tsirelson's bound.

{\em \bf General entropy definition.---}We now recount certain results from recent research into how to quantify entropy in general probabilistic theories~\cite{ShortW09, Barnum09, KimuraNI09}. We shall in particular use a definition of entropy for general theories from~\cite{ShortW09} which is  based on the Shannon entropy. This is highly analogous to how the von Neumann entropy generalises the Shannon entropy $H_{\mathrm{Sh}}(\vec P )=-\sum_i P_i\log P_i$ to the quantum case. The intuition is that the von Neumann entropy is the minimal Shannon entropy over all measurements. Actually it is over all {\em fine-grained} measurements (explained below).

Note that one can in general define the Shannon entropy associated with a measurement $e$ as $H_{\mathrm{Sh}}(e(\vec P))=-\sum_i (\vec R_i^e\cdot \vec P)\log (\vec R_i^e\cdot \vec P)$

\begin{defin}[Entropy \cite{ShortW09}]
For every normalized state $\vec P\in\mathcal S$ the entropy $H(\vec P)$ is given by
\begin{equation}
 \label{def_entr}
H(\vec P)\equiv \inf\limits_{e\in\mathcal{M^*}}H_{\mathrm{Sh}}\left(e(\vec P)\right)\, .
\end{equation}
$e(\vec P)$ denotes the classical probability distribution for the different outcomes of $e$ and 
the minimization is over the set of all fine-grained measurements $\mathcal{M^*}$. 
\end{defin}

$\mathcal{M^*}$ above is defined to be the set of measurements which have no {\em non-trivial fine-grainings}. A {\em fine-graining} is a subdivision of one outcome into several different outcomes. A {\em trivial} fine-graining is one where the resulting outcomes do not have independent probabilities, or more formally, where the vectors representing the respective effects are proportional to the effect-vector associated with the original coarse-grained outcome. 

The restriction to minimizing over $\mathcal{M^*}$ is important. If one allowed coarse-grained measurements the entropy could always be reduced arbitrarily by grouping outcomes together into single outcomes. It is natural to draw the line at trivial fine-grainings since no more information is yielded by them.

The entropy $H(\vec P)$ can be interpreted as the minimal uncertainty that
is associated with the outcome of a maximally informative measurement. 
It has some appealing properties: (i) $H$ reduces to the Shannon entropy for classical probability theory and the von
Neumann entropy in quantum theory, (ii) Suppose that the minimal number of outcomes for a fine-grained
measurement in $\mathcal{M}^*$ is $d$. Then for all states $\vec P \in \mathcal{S}$, $
 \log(d)\geq H(\vec P)\geq0$ and (iii) for any $\vec P_1$, $\vec P_2 \in \mathcal{S}$ and any mixed state 
$\vec P_{\rm mix}=p\vec P_1+(1-p)\vec P_2\in\mathcal{S}$:
$H(P_{\rm mix})\geq pH(\vec P_1)+(1-p)H(\vec P_2)$~\cite{ShortW09}.

For a state $\vec P_{AB}$ of a bipartite system $AB$ one defines the conditional entropy of $A$ conditioned on $B$ by~\cite{ShortW09}
\begin{equation}
 \label{condentr}
H(A|B)_{\vec P_{AB}}\equiv H(\vec P_{AB})-H(\vec P_{B})\,,
\end{equation}
with $\vec P_B$ the reduced state of $\vec P_{AB}$.
If there are no ambiguities we drop the indices and we write $H(A)$ instead of $H(\vec P_A)$ and
$H(AB)$ instead of $H(\vec P_{AB})$, and so on.

Some properties that are satisfied in quantum theory (where this
entropy reduces to the von Neumann entropy) are not necessarily satisfied for arbitrary theories. In box-world, for
example so-called strong subadditivity can be violated, as well as the subadditivity of the conditional entropy~\cite{ShortW09}.

{\em \bf Data processing inequality.---}The data processing inequality (DPI) is a crucial property of entropy measures which is frequently used in proofs in classical as well as quantum information theory~\cite{CoverT06, NielsenC00}. 
DPI quantifies the notion that local operations cannot increase correlations. 
A standard formulation for the classical case is that $H(X|Y)\leq H(X|g(Y))$, where $X$ and $Y$ are random variables which may be correlated, $H(X|Y):=H(XY)-H(Y)$, and $g(Y)$ is a function of Y only. The quantum DPI is the same, but with $H$ denoting the von Neumann entropy. 

We will here use the following generalised definition of DPI due to Short and Wehner~\cite{ShortW09}.
\begin{defin}[Data Processing Inequality (DPI)]
Consider two systems $A$ and $B$. The data processing inequality is that for any allowed state $\vec P_{AB} \in \mathcal{S}_{AB}$ and for any allowed
local transformation $T: \vec P_B \rightarrow \vec P'_B$ 
\begin{equation}
 \label{dataproc_entr}
H(A|B)_{\vec P_{AB}}\leq H(A|B')_{(\id\otimes T)\vec P_{AB}},
\end{equation}
 where $H(\cdot |\cdot)$ denotes the conditional entropy of Eqn.~\eqref{condentr}.
\end{defin}

{\em \bf Main result.---}Our main result links the data processing inequality with Tsirelson's bound. 

\begin{thm}
In any general probabilistic theory where the Data Processing Inequality is respected, the Tsirelson bound is respected. 
\end{thm}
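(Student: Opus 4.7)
The plan is to connect DPI to Tsirelson's bound through the van Dam information-transfer game. Since information causality is already known to imply Tsirelson's bound (Pawlowski et al.), it suffices to show that DPI, combined with the properties of the entropy $H$ that are established in the excerpt (reduction to the Shannon entropy on classical systems, the existence of classical $(1,d)$ systems, and cloning of classical states), rules out any resource that would let Alice and Bob win the nested van Dam game beyond the information-causality threshold.

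The first step is to fix the standard nested van Dam protocol: Alice holds $N=2^n$ unbiased independent bits $\vec a$, Bob holds an index $b\in\{0,\dots,N-1\}$, they share $N-1$ copies of a bipartite resource with CHSH value $S$, and Alice is permitted to send Bob a classical string $\vec x$ of length $m$. A standard combinatorial analysis shows that Bob can compute a guess $\beta$ whose bias for $a_b$ is a function of $S-2$ that grows with the nesting depth. For $S>2+\sqrt 2$, a suitable choice of $n$ yields $\sum_i I_{\mathrm{Sh}}(a_i:\beta\mid b{=}i)>m$, which I would make rigorous via a Fano-type inequality using only that $H$ reduces to the Shannon entropy on classical registers.

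The opposite inequality should follow from DPI alone. I would track $H(\vec a\mid \text{Bob})$ through the protocol, starting from $H(\vec a\mid\text{Bob's share of the resource})=H(\vec a)=N$. Every local operation Bob performs is non-decreasing in this conditional entropy by DPI; absorbing the length-$m$ classical register $\vec x$ into Bob's side contributes at most $m$ by additivity of $H$ on classical product states together with classical cloning; and the final coarse-graining down to $\beta$ is controlled again by DPI. Rearranging delivers exactly the information-causality inequality $\sum_i H(a_i)-H(a_i\mid\beta,b{=}i)\le m$, contradicting the first step.

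The hardest step, I expect, is the absorption of the classical message $\vec x$ into Bob's side: in the quantum proof of information causality one uses strong subadditivity here, and the excerpt explicitly flags SSA as potentially failing in box-world. The workaround I would pursue is to model $\vec x$ as a cloned classical register attached to Bob via the ``add system'' map, so that its cost is bounded by $H(\vec x)\le m$ using only the classical-system axioms and DPI. Formulating the bookkeeping this way yields the generalised statement the excerpt alludes to: any entropy satisfying reduction to Shannon on classical systems, additivity on classical products, and DPI forces Tsirelson's bound, with Theorem 1 recovered as the corollary for the Short--Wehner entropy.
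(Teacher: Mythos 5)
Your proposal is correct and follows essentially the same route as the paper: derive the information-causality inequality for van Dam's game from DPI together with the chain rule for conditional entropy and the classical-system axioms (cloning, reduction to Shannon entropy), and then conclude Tsirelson's bound --- the only cosmetic difference being that you would re-derive the Pawlowski et al.\ step that information causality implies Tsirelson's bound rather than simply citing it. The two points you leave implicit --- the subadditivity $\sum_i H(a_i|\gamma)\ge H(\vec a|\gamma)$ needed to pass from the joint bound to the per-bit terms, and the fact that absorbing the correlated (not product) message costs at most $m$, which rests on $H(\vec x|\vec a B)\ge 0$ via cloning and $H(\vec x|B)\le H(\vec x)$ via DPI rather than on product-state additivity --- are precisely the content of the paper's Lemmas 1, 3 and 4 and follow from the same assumptions you invoke.
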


\begin{proof}

We here sketch the proof---see the appendix for the details. 

We use the fact that the entropy of Def. 3 satisfies two properties: (i) $H(A|B):=H(AB)-H(B)$ (we call this COND), and (ii) it reduces to the Shannon entropy for classical systems (we call this SHAN).

We prove that for {\em any} theory and entropy measure $H$ jointly satisfying COND, SHAN and DPI, Tsirelson's bound holds (where DPI has been defined using $H$). This implies the main theorem. %(These conditions can be compared with those of~\cite{Barnum09}; note that there COND is taken as the definition of conditional entropy and thus also assumed.) 

The three conditions are not trivially applicable to restrict the resource state in van Dam's game so we use them, within the framework of probabilistic theories, to derive certain more directly applicable lemmas, including: (i) {\em $\sum_i H(A_i |\gamma)\geq H(A|\gamma)$, where $A_i$ denotes the i-th party of a multi-party system $A$} (ii) {\em  $H(A)\geq H(A|B)$ with equality for product states}, and (iii) {\em for classical systems $X$, $H(X|Y)\geq 0$}. With these lemmas and some additional arguments we show information causality is respected, and thus, by~\cite{PawlowskiPKSWZ09}, Tsirelson's bound.

\end{proof}

{\em \bf Discussion.---}We have shown that the generalised DPI implies Tsirelson's bound.  This addresses a question raised in~\cite{ShortW09}, namely in what manner enforcing generalised entropic relations restricts the set of possible theories. It also contributes to our understanding of why Bell-violations in quantum theory respect Tsirelson's bound.

As indicated in the proof sketch, our quantitative results can be applied to more general entropy measures. In particular, for {\em any} entropy measure $H$ and theory jointly satisfying COND, SHAN and DPI, we show that Tsirelson's bound holds. Thus one could alternatively have used for example the {\em decomposition entropy} of~\cite{ShortW09} in the statement of the main theorem as it satisfies SHAN and is defined to satisfy COND~\cite{ShortW09}. At the same time one may argue that whilst an operationally appealing definition of conditional entropy should automatically satisfy SHAN and DPI it is not clear why it should in general satisfy COND. COND may then be viewed as a {\em restriction} on states rather than a {\em definition} of conditional entropy.  

One can compare our three sufficient conditions COND, SHAN and DPI to those used in~\cite{PawlowskiPKSWZ09} and ~\cite{Barnum09} respectively. The entropic relations used in~\cite{PawlowskiPKSWZ09} to derive information causality were formulated in terms of a conditional mutual information $I(A:B|C)$. (It is assumed this can be defined in a more general setting, but no definition is given.) The conditions are that $I(A:B|C)$ should: be symmetric under change of A and B, be non-negative ($I\geq 0$), reduce to the Shannon mutual information for classical systems, obey the Data Processing Inequality as formulated for mutual information, and obey the chain rule $I(A:B|C)=I(A:BC)-I(A:C)$. Arguably our three relations are more minimalistic and natural than those. Moreover we show the arguments apply to particular concrete definitions of entropy and that for at least two particular definitions of conditional entropy DPI alone suffices. Consider secondly~\cite{Barnum09}. There  concrete entropy definitions are proposed and studied. The definitions are very similar to~\cite{ShortW09} though the framework is not a priori exactly identical. They define three properties in terms of conditional entropy as $H(AB)-H(B)$, with $H$ the measurement entropy: (i) 'monoentropicity' (two particular different entropy measures always have the same value), (ii) a version of the Holevo bound, and (iii) 'strong sub-additivity' (defined below). They show that those conditions imply information causality. They moreover note that conditions (ii) and (iii) can be derived from DPI defined in terms of the above conditional (measurement) entropy (more correctly they define it using mutual information $I(A:B):=H(A)+H(B)-H(AB)$ but this is equivalent in this case). Assumption (i) is used to obtain what we here derive as Eq.~\ref{eq:classical}. Thus it appears one may alternatively summarise their result on information causality as follows: DPI (in terms of COND and measurement entropy) plus mono-entropicity implies information causality. This can be compared to our Theorem 1; it is not so clear how to compare it to our more general Theorem 2, as the latter does not refer to a specific entropy measure, but to any state space and conditional entropy measure jointly satisfying DPI, COND and SHAN.

DPI is related to a condition known as {\em strong subadditivity} (SSA) which states that $H(A|CD)\leq H(A|C)$. SSA is {\em implied} by DPI since forgetting $D$ is an allowed local operation. In the quantum case SSA also implies DPI, but this does not necessarily hold in other theories as the standard quantum proof relies on the specific quantum feature known as Stinespring dilation. In the extreme case of box-world it was already known that SSA (and thus also DPI) is violated~\cite{ShortW09}. As an example consider two classical bits $x^0$, $x^1$ and a {\em gbit} $Z$. The latter is a (2,2) system which can take any allowed distributions, i.e.\ its state space is the convex hull of four states wherein the two outcomes take defined values for each measurement. The classical bits are uniformly random but the gbit contains their values. Then $H(x^0|x^1Z)=1$ whereas $H(x^0|Z)=0$, violating SSA~\cite{ShortW09}.

It is an open question whether there are theories which satisfy DPI but have states not contained in quantum theory, since Tsirelson's $2+\sqrt{2}$ bound is insufficient to rule out all non-quantum states. Understanding this and  with what DPI needs to be supplemented in order to derive quantum theory fully are natural next steps.

{\em \bf Acknowledgements.---}We acknowledge comments on an earlier draft from J. Oppenheim, A. Short and S. Wehner, advice on references by V. Scarani, as well as funding from the Swiss National Science Foundation (grant No. 200020-135048) and the European Research Council (grant No. 258932). The research was carried out in connection with DL's Master's thesis at ETH Zurich.

{\em Additional Note.---} Similar results have been obtained 
independently in~\cite{AlSafiS11} by Al-Safi and Short.

\bibliography{DPIrefs}
\bibliographystyle{h-physrev}

\newpage
\appendix

\section{Proof of main theorem}
The main theorem is a direct corollary of a more general theorem, Theorem 2, which we state and prove in this section. 
Crucially, Theorem 2 does not refer to a specific entropy measure such as the measurement entropy defined above. 

We require three definitions to state this theorem.

Firstly we redefine DPI, now defined without reference to a specific entropy definition.
\begin{defin}[Data Processing Inequality (DPI)]
Consider two systems $A$ and $B$. The data processing inequality is that for any allowed state $\vec P_{AB} \in \mathcal{S}_{AB}$ and for any allowed
local transformation $T: \vec P_B \rightarrow \vec P'_B$ 
\begin{equation}
 \label{dataproc_entr}
H(A|B)_{\vec P_{AB}}\leq H(A|B')_{(\id \otimes T) \vec P_{AB}}.
\end{equation}
\end{defin}

\begin{defin}[Conditional entropy (COND)]
The conditional entropy $H(A|B)$, however it is defined, must for all allowed states on $AB$ satisfy 
\begin{equation}
\label{eq:cond1}
H(A|B)=H(AB)-H(B).
\end{equation}
\end{defin}

\begin{defin}[Reduction to Shannon entropy (SHAN)]
The entropy $H$ must reduce to the Shannon entropy for classical systems.
\end{defin}

Our statements are restricted to the generalised probabilistic framework, as described in the introduction to the paper. We shall be making use of two non-trivial but operationally well-motivated types of transformations associated with that framework: {\em adding} and {\em removing} systems. An (independent) system in state $\vec{P_B}$ is {\em added} by the map taking any $\vec{P_A}$ to $\vec{P_A}\otimes \vec{P_B}$. A system is {\em removed} by taking the marginal distribution on the other system(s), as described in the introduction. We shall make use of the fact that this map acts to take the removed system $B$ to the vacuum system $V$. The only normalised state of the vacuum is $\vec{\id_V}=1$ (this can be seen from the equivalent definition of the marginal state used e.g. in~\cite{BarnumBLW06}). Thus, and this is another equation we shall find useful, $\vec{P_A}\otimes \vec{\id_V}=\vec{P_A}\,\forall \vec{P_A}$.

We shall also be assuming that the entropy measure is operational, i.e.\ is uniquely determined by the statistics of the experiment under consideration. Thus it is for a given set-up determined by the state of the systems under consideration. More subtly, $H$ moreover cannot depend on the order in which the state-spaces of the subsystems are composed, as this order is arbitrary; different observers describing the same experiment can make different choices here. Thus $H(AB)$ must be invariant under the interchange of systems $A$ and $B$.

We are now ready to state the theorem:

\begin{thm} For any probabilistic theory and entropy measure $H$ satisfying COND, SHAN and DPI, Tsirelson's bound holds.
\end{thm}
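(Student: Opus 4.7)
The plan is to show that COND, SHAN and DPI together suffice to derive information causality in van Dam's game, and then invoke the Pawlowski et al.\ implication from information causality to Tsirelson's bound~\cite{PawlowskiPKSWZ09}. The bridge from the abstract entropic axioms to the information-causality inequality $I \leq m$ will be routed through the three lemmas highlighted in the proof sketch.

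First I would derive the conditional-subadditivity lemma $\sum_i H(A_i|\gamma) \geq H(A_1\cdots A_n|\gamma)$: the two-party case follows by applying DPI with the ``remove $A_2$'' transformation to the conditioning side of $H(A_1|A_2\gamma)$, obtaining $H(A_1|A_2\gamma) \leq H(A_1|\gamma)$, and then rewriting each side via COND; iteration handles $n>2$. Next I would establish $H(A) \geq H(A|B)$ with equality for product states: one direction is DPI applied to the ``remove $B$'' transformation combined with $H(V)=0$ (from SHAN together with $\vec P_A \otimes \vec{\id_V} = \vec P_A$); the product-state equality uses DPI in the opposite direction, with the ``add $B$'' transformation acting on the vacuum slot of the trivially extended state to produce $\vec P_A \otimes \vec P_B$, sandwiching the inequality.

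For the third lemma, $H(X|Y) \geq 0$ whenever $X$ is classical, I would exploit that classical systems can be cloned. Cloning $X$ maps $\vec P_{XY}$ to $\vec P_{XX'Y}$ with $X=X'$ perfectly correlated; SHAN gives $H(XX')=H(X)$, and applying DPI in both directions (to the cloning map and to the inverse removal of $X'$) sandwiches $H(YXX')$ to the value $H(YX)$. By COND this forces $H(X'|XY)=0$, and a final DPI step removing $X$ from the conditioning gives $0 = H(X'|XY) \leq H(X'|Y) = H(X|Y)$, where the last equality uses $\vec P_{X'Y}=\vec P_{XY}$. With the three lemmas in hand, information causality follows by writing $I = N - \sum_i H(a_i|\beta,b=i)$, using DPI to replace Bob's guess $\beta$ by his full conditioning data $(X,R_B)$, applying Lemma 1 to fuse $\sum_i H(a_i|X R_B)$ into $H(\vec a|X R_B)$, and then using Lemma 3 and the preservation of the product structure on $(\vec a,R_B)$ by Alice's local operation, together with Lemma 2 and the SHAN bound $H(X)\leq m$, to conclude $H(\vec a|X R_B) \geq N - H(X|R_B) \geq N - m$.

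I expect the main obstacle to be Lemma 3: the cloning argument forces one to chain DPI in two opposite directions and to invoke SHAN on a composite classical system $XX'$, and failure of this lemma would block the key step $H(\vec a X R_B) \geq H(\vec a R_B)$ on which the information-causality assembly hinges. A secondary subtlety is verifying that the $(\vec a,R_B)$ product structure is preserved by Alice's local operation so that Lemma 2 can be applied after the message $X$ has been generated; this is intuitive in the quantum or classical setting, but needs explicit justification from the general framework.
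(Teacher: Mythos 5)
Your proposal is correct and follows essentially the same route as the paper: the same three lemmas (conditional subadditivity from DPI+COND, $H(A)\geq H(A|B)$ with equality for product states via adding/removing systems and $H(V)=0$, and $H(X|Y)\geq 0$ for classical $X$ via cloning), then the $H(\vec a|B\vec x)\geq n-m$ bound, information causality, and the Pawlowski et al.\ implication to Tsirelson's bound. Your cloning argument for the third lemma differs only cosmetically from the paper's (you pin down $H(X'|XY)=0$ by a DPI sandwich and use $\vec P_{X'Y}=\vec P_{XY}$, where the paper instead invokes the operational indistinguishability of $X_A$ and $X_B$ after cloning), and both rest on the same ingredients.
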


Before proving Theorem 2 we note that the main theorem (Theorem 1) is directly implied by this statement as the entropy $H$ referred to there satisfies COND and SHAN.

Before proving Theorem 2 we prove some lemmas which we shall need and which may be of interest in themselves.

\begin{lem}
\label{lem:AEC}
COND and DPI imply the relation 
\begin{equation}
\label{eq:cond3}
\sum_i H(A_i|\gamma)\geq H(A_1...A_n|\gamma)
 \end{equation}
for any $\vec{P}_{A_1...A_n} \in \mathcal{S}_{A_1...A_2} $, where $A_i$ denotes the i-th party of the total system $A_1...A_n$.
\end{lem}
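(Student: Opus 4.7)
The plan is to reduce $H(A_1 \cdots A_n | \gamma)$ to a sum of single-party conditional entropies through a two-ingredient argument: a chain-rule decomposition supplied by COND, followed by a termwise application of DPI using the ``remove subsystem'' operation, which the excerpt has already declared to be an allowed local transformation.

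First I would expand $H(A_1 \cdots A_n | \gamma)$ by repeatedly applying COND, inserting and cancelling intermediate joint entropies, to obtain the telescoping identity
\begin{equation*}
H(A_1 \cdots A_n | \gamma) = H(A_n|\gamma) + \sum_{k=1}^{n-1} H\bigl(A_k \,\big|\, A_{k+1}\cdots A_n \gamma \bigr).
\end{equation*}
Here I am tacitly using the fact, stressed in the text just before Theorem 2, that $H$ is invariant under reordering of subsystems, so I can freely regroup $A_{k+1}\cdots A_n\gamma$ as a single composite conditioning system and apply COND to $H(A_k \cdots A_n \gamma) - H(A_{k+1}\cdots A_n \gamma)$.

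Second, for each $k$ I would bound the conditional entropy $H(A_k | A_{k+1}\cdots A_n \gamma)$ by applying DPI on the conditioning side. The relevant local transformation $T$ is the one that removes the subsystems $A_{k+1},\ldots,A_n$, sending them to the vacuum $V$; the excerpt explicitly lists removal of systems (i.e.\ taking marginals) among the transformations available in every theory considered. Since $\vec{P}_A \otimes \vec{\id_V} = \vec{P}_A$, the image of $A_{k+1}\cdots A_n \gamma$ under $\id_{A_k} \otimes T$ is just $\gamma$, so DPI yields $H(A_k | A_{k+1}\cdots A_n \gamma) \leq H(A_k | \gamma)$. Substituting these $n-1$ bounds into the telescoping identity, together with the free term $H(A_n|\gamma)$, gives the desired $\sum_i H(A_i|\gamma) \geq H(A_1\cdots A_n|\gamma)$.

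I do not expect any serious obstacle: the proof is essentially the textbook derivation of subadditivity from the chain rule and DPI, translated into the generalised probabilistic setting. The only point requiring care is to justify that ``remove a subsystem'' really is an admissible local transformation $T$ in the sense required by the excerpt's DPI, but this is guaranteed by the assumption listed immediately before the statement of Theorem~2 that the theory contains the marginalisation map and that $\vec{P}_A \otimes \vec{\id_V}$ is identified with $\vec{P}_A$. Everything else is bookkeeping with COND.
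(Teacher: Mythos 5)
Your proof is correct and takes essentially the same route as the paper's: both use COND to decompose the joint conditional entropy and DPI (with the admissible ``remove systems'' map acting on the conditioning side) to replace each $H(A_k\,|\,A_{k+1}\cdots A_n\gamma)$ by $H(A_k|\gamma)$. The only cosmetic difference is that the paper establishes the $n=2$ case and iterates it, whereas you write out the full telescoping chain rule first and then apply DPI termwise.
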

\begin{proof}
Consider firstly $n=2$. By COND we have  $$H(A_1|\gamma)+H(A_2|\gamma)-H(A_1A_2|\gamma)=-H(A_2|A_1\gamma)+H(A_2|\gamma).$$
By DPI this is greater than or equal to 0. 

To generalise the argument to $n>2$, let $A_2$ be replaced by $A_2...A_n$ in  the previous equation. Then by the same argument 
$$H(A_1|\gamma)+H(A_2..A_n|\gamma)-H(A_1A_2..A_n|\gamma)\geq 0.$$
Now we can apply the previous argument to the term $H(A_2..A_n|\gamma)$ to get 
$$H(A_2|\gamma)+H(A_3...A_n|\gamma)\geq H(A_2...A_n|\gamma).$$ This process is then repeated iteratively to recover $\sum_i H(A_i|\gamma)\geq H(A_1...A_n|\gamma)$.
\end{proof}

\begin{lem}
\label{lem:ProdState}
For product states $\vec{P}_A\otimes \vec{P}_B$, COND, SHAN and DPI imply the relation
\begin{equation}
\label{eq:cond2}
 H(A|B)=H(A).
 \end{equation}
 \end{lem}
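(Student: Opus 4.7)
The aim is to establish $H(A|B) = H(A)$ for any product state $\vec{P}_A \otimes \vec{P}_B$. By COND this is equivalent to $H(AB) = H(A) + H(B)$, and the plan is to obtain this equality by applying DPI twice: once with the ``remove $B$'' transformation to get the upper bound, and once with the ``add $B$'' transformation to get the matching lower bound.

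For the upper bound, apply DPI to $\vec{P}_A \otimes \vec{P}_B$ using the allowed local ``remove $B$'' map, which sends $B$ to the vacuum and hence takes $\vec{P}_B$ to $\vec{\id_V}$. This yields $H(A|B)_{\vec{P}_A \otimes \vec{P}_B} \leq H(A|V)_{\vec{P}_A \otimes \vec{\id_V}}$. By COND, the right side equals $H(AV) - H(V)$. SHAN forces $H(V) = 0$, since the vacuum is classical of type $(1,1)$ with a trivial single-outcome distribution, and the identification $\vec{P}_A \otimes \vec{\id_V} = \vec{P}_A$ noted earlier in the appendix gives $H(AV) = H(A)$. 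Hence $H(A|B) \leq H(A)$.

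For the lower bound, run the argument in reverse. Start from the state $\vec{P}_A \otimes \vec{\id_V}$ and apply the allowed local ``add $B$'' transformation on the $V$-side, which takes $\vec{\id_V}$ to $\vec{P}_B$ and so produces $\vec{P}_A \otimes \vec{P}_B$. DPI then gives $H(A|V)_{\vec{P}_A \otimes \vec{\id_V}} \leq H(A|B)_{\vec{P}_A \otimes \vec{P}_B}$, whose left side is again $H(A)$ by the same reduction. Combining with the upper bound yields $H(A|B) = H(A)$.

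The main conceptual step --- rather than any calculational obstacle --- is recognising that DPI must be invoked twice in opposite directions, with the vacuum $V$ serving as a neutral conditioning system through SHAN so that $H(A|V) = H(A)$ anchors both applications. The restriction to product states in the lemma statement is essential, because only the ``add $B$'' direction (whose output is inherently a product state) delivers the lower bound; for general correlated states one retains only the upper bound $H(A|B) \leq H(A)$, consistent with the role this lemma will play alongside Lemma \ref{lem:AEC} in the main proof.
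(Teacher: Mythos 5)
Your proof is correct and follows essentially the same route as the paper: establish $H(A|V)=H(A)$ via COND and SHAN, apply DPI to the ``remove $B$'' map for $H(A|B)\leq H(A)$ (valid for any state), and apply DPI to the ``add $B$'' map from the vacuum for the reverse inequality on product states. Your closing remark about the upper bound holding generally matches the paper's Eq.~(\ref{eq:cond4}).
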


\begin{proof}
We firstly use COND and SHAN to show that $H(A|V)=H(A)$ for any system $A$. This follows from the following:
\begin{eqnarray}
H(A|V)&=& H(AV)-H(V)\\
      &=&H(A)-0\\
      &=&H(A).
\end{eqnarray}
(Here COND implies the first line. As $V$ is classical and with only one measurement outcome, SHAN implies $H(V)=0$; $\vec{P_{AV}}=\vec{P_A}$ as mentioned in the beginning of the appendix.) 

We now prove the equality of the lemma by separately proving the two corresponding inequalities in both directions. Note firstly that 
\begin{equation}
\label{eq:cond4}
H(A)\geq H(A|B)
\end{equation}
 for any state. To see this, consider the transformation $T$ that takes $B$ to the vacuum system (i.e. the transformation that {\em removes} $B$ as described in the introduction to the appendix). 
Then, using DPI, $$H(A|B)\leq H(A|T(B))=H(A|V)=H(A).$$

Consider secondly the inequality in the other direction, restricting ourselves to the case of product states only: 
\begin{equation}
\label{eq:prodineq}
H(A)_{\vec{P}_A} \leq H(A|B)_{\vec{P}_A\otimes \vec{P}_B}.
\end{equation}
This is true because $H(A)=H(A|V)_{\vec{P}_A\otimes \vec{I}_V}\leq H(A|B)_{\vec{P}_A\otimes \vec{P}_B}$, where the last step uses DPI for the transformation that creates $\vec{P}_B$ from the vacuum state (i.e. the transformation that {\em adds} $B$ as described in the introduction to the appendix).    

Combining Eqns. \ref{eq:cond4} and \ref{eq:prodineq} proves the claim.
\end{proof}

\begin{lem}
DPI, SHAN and COND imply that for all classical systems X,  
\begin{equation}
\label{eq:classical}
H(X|Y)\geq 0.
\end{equation}

\end{lem}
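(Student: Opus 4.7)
The plan is to exploit the fact that classical systems can be cloned in order to construct an auxiliary tripartite state on $XX'Y$ in which a copy $X'$ of $X$ lives on the conditioning side, show that conditioning on $X'Y$ already pins $X$ down (so $H(X|X'Y)=0$), and then undo the cloning by DPI to conclude that $H(X|Y)\geq H(X|X'Y)=0$.

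Concretely, I would start from $\vec P_{XY}$ and apply the classical cloning transformation to $X$, producing $\vec P_{XX'Y}$ in which $X$ and $X'$ are perfectly correlated classical copies. Applying DPI to the cloning map (acting locally on the conditioning system in the bipartition $Y\,|\,X$) gives $H(Y|X)\leq H(Y|XX')$, while applying DPI to the reverse operation of discarding $X'$ gives $H(Y|XX')\leq H(Y|X)$, so the two are equal. By COND this rewrites as $H(XX'Y)-H(XX') = H(XY)-H(X)$. Because $XX'$ is a classical system whose joint distribution is supported on the diagonal, SHAN gives $H(XX')=H_{\mathrm{Sh}}(P_X)=H(X)$, so the equality collapses to $H(XX'Y)=H(XY)$. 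A separate use of the operational invariance of $H$ under relabeling, together with the fact that the marginal of $\vec P_{XX'Y}$ on $X'Y$ coincides (up to the relabeling $X\leftrightarrow X'$) with $\vec P_{XY}$, gives $H(X'Y)=H(XY)$ as well.

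Combining these in COND yields $H(X|X'Y)=H(XX'Y)-H(X'Y)=0$. Finally, discarding $X'$ is an allowed local transformation on the $X'Y$ part in the bipartition $X\,|\,X'Y$ (it maps $X'$ to the vacuum so $X'Y\mapsto Y$), so DPI gives $H(X|X'Y)\leq H(X|Y)$ and hence $H(X|Y)\geq 0$, as required.

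I expect the only subtle points to be bookkeeping ones: verifying that the classical-cloning map is a well-defined local transformation on the $X$ factor of the bipartition $(X\,|\,Y)$, that the subsequent regrouping as $(X\,|\,X'Y)$ is merely a relabeling of the same tripartite state rather than a new physical operation, and that SHAN really does apply to $XX'$ viewed as a single classical system. None of these is an essential mathematical obstacle once the framework set up in the paper (and in particular the cloning assumption quoted from \cite{BarnumBLW06}) is invoked, but they are the places where a careless proof could slip.
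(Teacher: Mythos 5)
Your proposal is correct and follows essentially the same route as the paper's proof: clone the classical system, use SHAN on the perfectly correlated classical pair, exploit the $X\leftrightarrow X'$ relabeling symmetry of the cloned state, and finish by discarding the clone via DPI. The only differences are bookkeeping ones --- you derive the exact equality $H(X|X'Y)=0$ by applying DPI in both directions (cloning and discarding), whereas the paper uses its product-state additivity lemma to handle the initial ancilla and only needs the inequality $H(X_A|YX_B)\geq 0$.
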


\begin{proof}
To prove the lemma via DPI we shall use the fact that the extremal states of classical systems can be {\em cloned}~\cite{BarnumBLW06}. More specifically,  we shall make use of the fact that for a classical system $X_A$ in state $\vec{P}_A=\sum_ip_i\vec{\mu}_i$, where the $\vec{\mu}$ are pure, and another classical system $X_B$ of the same dimensionality in any given independent pure state $\vec{\mu}_k$, there exists a map $T_C$ such that $T_C(\vec{P}_A\otimes\vec{P}_B)=\sum_ip_i\vec{\mu}_i\otimes\vec{\mu}_i$.

We shall consider a three-party system $YX_AX_B$, where $Y$ is the only non-classical sub-system. The idea is that given an arbitrary state on $YX := YX_A$, we can always bring in another independent subsystem $X_B$ and perform a cloning operation so that $X_B$ becomes a copy of $X_A$. We may then apply DPI on the cloning transformation $T_C$ applied on $X_A$ and $X_B$. 
We call the states before and after the cloning $\vec{P}_{YX_AX_B}^i$ and $\vec{P}_{YX_AX_B}^f$ respectively.

By DPI we then have
\begin{equation}
H(Y|X_AX_B)_{\vec{P}_{YX_AX_B}^i}\leq   H(Y|X_AX_B)_{\vec{P}_{YX_AX_B}^f}
\end{equation}
Note now that the left-hand-side can be simplified. COND together with Eq.~(\ref{eq:cond2}) imply that $H(AB)=H(A)+H(B)$ for independently prepared $A$ and $B$. This can be applied here because $X_B$ is initially in an independent state, yielding  
$$H(Y|X_AX_B)_{\vec{P}_{YX_AX_B}^i}=H(Y|X_A)_{\vec{P}_{YX_AX_B}^i}.$$
Accordingly  
$$H(Y|X_A)_{\vec{P}_{YX_AX_B}^i}\leq H(Y|X_AX_B)_{\vec{P}_{YX_AX_B}^f}.$$

We also note that the marginal state on $YX_A$ is unchanged by the cloning, i.e. $\vec{P}_{YX_A}^i=\vec{P}_{YX_A}^f$, so we may for simplicity write that for the state {\em after} the cloning, 
\begin{equation}
\label{eq:wq}
H(Y|X_A)\leq H(Y|X_AX_B).
\end{equation}
In the following, unless stated otherwise, we consider the state after the cloning only.

Applying Eq.~(\ref{eq:cond1}), i.e. COND, to Eq.~(\ref{eq:wq}) and undertaking some rearrangements yields
$$H(X_B|YX_A)\geq H(X_B|X_A).$$
Moreover, SHAN implies that $H(X_B|X_A)=0$. Thus  
$$H(X_B|YX_A)\geq 0.$$

Note that since $X_A$ and $X_B$ are operationally indistinguishable after the cloning, $H(X_A|YX_B)=H(X_B|YX_A)$. Thus we have  
\begin{equation}
H(X_A|YX_B)\geq 0.
\end{equation}

By DPI 
\begin{equation}
H(X_A|Y)\geq H(X_A|YX_B). 
\end{equation}
Thus, still {\em after} the cloning, we have that  
\begin{equation}
H(X_A|Y)\geq 0.
\end{equation}
But since the state of $X_AY$ is unchanged by the cloning transformation, this implies that the equation holds also for the (arbitrary) initial state of $X_AY$. Recall that we used $X_A$ to label the classical system $X$. We have thus shown that  $H(X|Y)\geq 0$ for an arbitrary initial state on $XY$.
\end{proof}

\begin{lem}
\label{lem:Renatos}
COND, SHAN and DPI imply the relation
\begin{equation}
\label{eq:Renatos}
 H(\vec a|B \vec x)\geq n-m,
 \end{equation}
 where the quantities are as defined in the information causality game ($\vec a$ is the classical $n$-bit string given to Alice, $B$ is the non-classical resource and $\vec x$ is the classical $m$-bit message sent to Bob). 
\end{lem}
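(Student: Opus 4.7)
The plan is to bound $H(\vec a|B\vec x)$ from below by repeatedly applying COND together with Lemmas~\ref{lem:AEC} and~\ref{lem:ProdState} and the preceding lemma on the non-negativity of classical conditional entropy (Eq.~\ref{eq:classical}), and then invoking SHAN at the end to evaluate the Shannon entropies of the classical strings $\vec a$ and $\vec x$. The two structural facts about the information causality protocol that I will use are that $\vec a$ is drawn as a uniformly random $n$-bit string \emph{independently} of the shared resource $B$, and that the message $\vec x$ is a classical $m$-bit string.

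Starting from COND I would write
\begin{equation*}
H(\vec a|B\vec x)=H(\vec a B\vec x)-H(B\vec x),
\end{equation*}
and then bound each term separately. For the first term, a second use of COND gives $H(\vec a B\vec x)=H(\vec a B)+H(\vec x|\vec a B)$; since $\vec x$ is classical, Eq.~(\ref{eq:classical}) (applied with $X=\vec x$ and $Y=\vec a B$) yields $H(\vec x|\vec a B)\geq 0$, hence $H(\vec a B\vec x)\geq H(\vec a B)$. Independence of $\vec a$ and $B$ then lets me invoke Lemma~\ref{lem:ProdState} to conclude $H(\vec a|B)=H(\vec a)$, which together with COND gives $H(\vec a B)=H(\vec a)+H(B)=n+H(B)$, the last equality by SHAN applied to the uniform $n$-bit string. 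For the second term, I would apply Lemma~\ref{lem:AEC} with the conditioning slot set to the vacuum $V$ and use $H(\cdot|V)=H(\cdot)$ (established in the proof of Lemma~\ref{lem:ProdState}) to obtain the subadditivity bound $H(B\vec x)\leq H(B)+H(\vec x)$; SHAN then forces $H(\vec x)\leq m$.

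Stringing the pieces together yields
\begin{equation*}
H(\vec a|B\vec x)\geq (n+H(B))-(H(B)+m)=n-m,
\end{equation*}
as required. The only real subtlety is justifying that the joint state on $\vec a$ and Bob's subsystem $B$ is a product state at the moment the message $\vec x$ is introduced: this follows because $\vec a$ is sampled independently of the shared resource and Alice's preparation of $\vec x$ is a local operation on her side, which by the no-signaling property of the framework preserves the marginal on Bob's subsystem (and leaves $\vec a$ untouched). Once that independence claim is accepted, the rest is a short bookkeeping exercise built entirely on COND, SHAN, DPI, and the three preceding lemmas; no new structural input is required, consistent with the aim of the appendix to derive information causality (and hence Tsirelson's bound via~\cite{PawlowskiPKSWZ09}) from these assumptions alone.
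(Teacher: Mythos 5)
Your proof is correct and takes essentially the same route as the paper's: both arguments rest on $H(\vec x|\vec a B)\geq 0$ from Eq.~(\ref{eq:classical}), the product-state identity $H(\vec a B)=H(\vec a)+H(B)$ via Lemma~\ref{lem:ProdState} and the independence of $\vec a$ from the resource, a DPI-derived subadditivity bound on $H(B\vec x)$, and SHAN to set $H(\vec a)=n$ and $H(\vec x)\leq m$. The only cosmetic differences are that you bound $H(\vec a B\vec x)$ and $H(B\vec x)$ separately instead of writing the paper's single telescoping chain, and you obtain subadditivity from Lemma~\ref{lem:AEC} with vacuum conditioning rather than from $H(\vec x|B)\leq H(\vec x)$ (Eq.~(\ref{eq:cond4})); these are equivalent rearrangements of the same steps.
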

\begin{proof}
\begin{eqnarray*}
H(\vec a|B \vec x)-H(\vec x|\vec aB)&=&-H(B \vec x)+H(\vec aB)\\
&=& -H(B \vec x)+H(\vec a)+H(B)\\
&=& H(\vec a)-H(\vec x|B)\\
&\geq & H(\vec a)-H(\vec x)\\
&=& n-H(\vec x)\\
&\geq & n-m.
\end{eqnarray*}
The first line follows from COND. The second line is due to the combination of Eq.~(\ref{eq:cond2}) and Eq.~(\ref{eq:cond1}) and recalling that $\vec a$ and $B$ are independent. The third line uses Eq.~(\ref{eq:cond1}) again. The fourth line follows from Eq.~(\ref{eq:cond4}). The fifth and sixth lines follow from the definition of the game as well as elementary properties of the Shannon entropy, which can be exploited due to SHAN.

It follows by applying Eq.~(\ref{eq:classical}) to the left hand side that $H(\vec a|B \vec x)\geq n-m$.
\end{proof}

We now put together the pieces to prove Theorem 2:
\begin{proof}[Proof of Theorem 2] By lemma \ref{lem:Renatos} above, we have $$ H(\vec a|B \vec x)\geq n-m. $$ By lemma \ref{lem:AEC} this implies 
$$\sum_i H(a_i|B \vec x)\geq n-m.$$
By DPI we accordingly have that for Bob's guess $\beta=\beta(B, \vec x, i)$
$$\sum_i H(a_i|\beta(i))\geq n-m,$$
where, by SHAN and the fact that $a_i$ and $\beta(i)$ are both classical, $H$ refers to the Shannon entropy. 

This implies information causality, as 
$I_{\mathrm{Sh}}(a_i:\beta(i))=H(a_i)-H(a_i|\beta(i))$, so
\begin{eqnarray*} 
\sum_i I_{\mathrm{Sh}}(a_i:\beta(i))&=&\sum_i H(a_i)-H(a_i|\beta(i))\\
&=&n-\sum_i H(a_i|\beta(i))\\
&\leq & m.
\end{eqnarray*}
Recall that information causality implies Tsirelson's bound.

\end{proof}

\end{document}